\setlist[itemize]{noitemsep}
\def\astar{{A$^{*}$}}
\def\knn{{$k$-NN}}
\def\kNN{{$k$-NN}}
\date{\today}
\begin{document}
\title{SALT. A unified framework for all shortest-path query variants on road networks}


\author{
  Alexandros Efentakis\\  \small{Research Center ``Athena''} 
 \\ \small{efentakis@imis.athena-innovation.gr}
  \and 
 Dieter Pfoser\\ \small{George Mason University} 
 \\ \small{dpfoser@gmu.edu}
  \and 
 Yannis Vassiliou\\  \footnotesize{National Technical} \\ \small{University of Athens, Greece} 
 \\  \small{yv@cs.ntua.gr}
}

\maketitle

\begin{abstract}
Although recent scientific output focuses on multiple shortest-path problem definitions for road networks, none of the existing solutions does efficiently answer all different types of SP queries. This work proposes SALT, a novel framework that not only efficiently answers SP related queries but also $k$-nearest neighbor  queries not handled by previous approaches. Our solution offers all the benefits needed for practical use-cases, including excellent query performance and very short preprocessing times, thus making it also a viable option for dynamic road networks, i.e., edge weights changing frequently due to traffic updates. The proposed SALT framework is a deployable software solution capturing a range of network-related query problems under one ``algorithmic hood''.
\end{abstract}
%


\section{Introduction}
\label{sec:intro} 

During the last decades, recent scientific literature has focused on researching efficient methods for shortest-path (SP) related problems.
The related research 
has evolved so rapidly that even the recent overviews of \cite{delling09,bauer2010} had to be updated in subsequent publications \cite{bast2014}. 
Unfortunately, despite this plethora of efficient algorithms 
only \emph{few of them may actually be used in a practical application context}. 
The requirements to such a potent approach should be (i) preprocessing time of few \emph{seconds} for continental road networks and (ii) SP query times of a few \emph{ms}. 
Currently, only two candidates fit these strict requirements. The \emph{graph-separator} approach of Customizable Route Planning (CRP) \cite{delling2011,delling2013e} and the recent adaptation \cite{efentakis2013d} of the ALT~\cite{goldberg2005} algorithm. 
Due to these specific properties, said algorithms are used in commercial solutions, such as Bing Maps and SimpleFleet~\cite{efentakis2013c}, 
for their live traffic-based routing.

Unfortunately, most of the developed algorithms are tuned to solving a specific problem efficiently, but are rather inefficient when used in a different context. Contrarily,  engineering a framework that efficiently solves multiple shortest-path problems, would not only increase the commercial potential of such a solution but would also be the first step towards the direction of a \emph{grand unified SP toolkit}. 
To this purpose, Efentakis et al.~\cite{efentakis2014} extended graph-separators and proposed the novel set of GRASP (Graph separators, RAnge, Shortest Path) algorithms that solve most variants of the single-source shortest-path problems on road networks, including \emph{one-to-all} (finding SP distances from a source vertex $s$ to all other graph vertices), \emph{one-to-many} (computing the SP distances between the source vertex $s$ and all vertices of a set of targets $T$) and \emph{range queries} (find all nodes reachable from $s$ within a given timespan / distance). GRASP requires minimal preprocessing time and provides excellent parallel query performance needed in the context of practical applications and respective commercial solutions.

Another fundamental problem frequently encountered in location-based services is the \knn\ query, i.e., given a query location and a set of objects
on the road network, the \knn\ search finds the $k$-nearest objects to the query location. 
Unfortunately, 
even recent attempts, such as G-tree \cite{zhong2013}, are not scalable with respect to the network size, since they require preprocessing of \emph{several hours} for continental road networks.  
In addition, for a large number of randomly distributed objects, an efficient Dijkstra implementation could answer \knn\/ queries (for small values of $k$) by settling a few hundreds nodes and requiring $< 1ms$. 
Moreover, most existing approaches (contrarily to Dijkstra) require a \emph{target-selection phase}, i.e., they need to mark the objects location within the underlying index. This phase takes a few seconds, hence having limited appeal for applications involving moving objects (e.g., vehicles). 
Therefore, it only makes sense to use a complex (as in non-Dijkstra) \knn\ processing framework in cases of either rather ``small'' numbers of objects or objects following skewed distributions (e.g., POIs located near the city center), i.e.,~for cases in which Dijkstra does not perform well.

Putting everything together, the ambition of this work is to provide a unified algorithmic solution that may be used in a \emph{dynamic road network} context by having very short preprocessing times and competitive query times, while covering a \emph{wide range of shortest-path and network search problems}, such as (i) single-pair, (ii) one-to-all, (iii) one-to-many, (iv) range and (v) \knn\ queries.
Specifically, we aim at combining the fragmented approaches related to the various shortest-path problem definitions and instead propose a unified framework that tackles all of them. Our proposed \textbf{SALT} (graph Separators + ALT) framework requires only seconds for preprocessing continental road networks and provides excellent query performance for a wide range of problems. We will show that SALT is (i) $3-4 \times$ faster for point-to-point queries when compared to existing methods of similar preprocessing times, (ii) it answers one-to-all, one-to-many and range queries with comparable performance to state-of-the-art approaches, and most importantly, (iii) 
it may also answer \knn\ queries in $<1ms$, for both, static or moving objects. 
As such, our SALT framework could be a \emph{swiss-army-knife for tackling all shortest-path problem variants}, making it a serious contender for use in commercial applications.    

The outline of this work is as follows. Section~\ref{sec:related_work} describes relevant previous related work. Section \ref{sec:contribution} describes our novel SALT framework and algorithms. Experiments establishing the benefits of our approach are provided in Section \ref{sec:experiments}. Finally, Section~\ref{sec:conclusions} gives conclusions and directions for future work.    

\section{Related work}
\label{sec:related_work}

Throughout this work, we are dealing with directed weighted graphs $G(V,E,w)$, where $V$ is the set of vertices, $E \subseteq  V\text{x}V$ are the arcs of the graph and $w$ is a positive weight function $E \rightarrow R^{+}$. 
The reverse graph $\overline{G}= (V,E)$ is the graph obtained from $G$ by substituting each arc $(u,v) \in E$ by $(v,u)$. 

A partition of $V$ is a family of sets $C=\{c_{0},c_{1},\dots c_{M}\}$, such that each node $u \in V$ is contained in exactly one set $c_{i}$. An element of a partition is called a $cell$. A multilevel partition of $V$ is a family of partitions $\{C^{0},C^{1},\dots C^{L}\}$ where $\ell$ denotes the level of a partition $C^\ell$.  Similar to \cite{delling2011}, level 0 refers to the original graph, $L$ is the highest partition level and in this work we use \emph{nested multilevel partitions}, i.e.,~for each $\ell < L$ and each cell $c^\ell_i$ there exists a unique cell $c^{\ell+1}_j$ (called the supercell of $c^\ell_i$) with $c^\ell_i  \subseteq c^{\ell+1}_j$. Accordingly, $c^\ell_i$ is a subcell of $c^{\ell+1}_j$. In this notation, 
$c^\ell(v)$ is the cell containing the vertex $v$ on level $\ell$. Accordingly, the number of cells of the partition $C^\ell$ is denoted as $|C^\ell|$.  For a boundary arc on level $\ell$, the tail and head vertices are located in different level-$\ell$ cells; a boundary vertex on level $\ell$ is connected with at least one vertex in another level-$\ell$ cell. Note that for nested multilevel partitions, a boundary vertex/arc at level $\ell$ is also a boundary vertex/arc for all levels below.

In $k$-NN queries, given a query location $s$ and a set of objects $O$,
the $k$-NN search problem finds k-nearest objects to the query location. 
Similar to \cite{zhong2013}, we assume that both the query location and the objects are located at vertices, which is a logical compromise, since (i)~for static objects we can add new vertices at object locations or (ii)~in the case of moving objects we can assign objects to vertices heuristically, e.g., placing them on their nearest vertex. 

\subsection{Landmarks and the ALT algorithm.}
\label{subsec:alt_algor}

In the ALT algorithm \cite{goldberg2005}, a small set of vertices called landmarks is chosen. Then, during preprocessing, we precompute distances to and from every landmark for each graph vertex.  Given a set $S{\subseteq} V$ of landmarks and distances $d(L_{i},v)$, $d(v,L_{i})$ for all vertices $v{\in} V$ and landmarks $L_{i}{\in} S$, the following triangle inequalities hold: $d(u, v) {+} d(v,L_{i}) \geq d(u, L_{i})$ and $d(L_{i}, u) {+} d(u, v) \geq d(L_{i}, v)$ . Hence, the function $\pi_{f}=max_{L_{i}}max\{ d(u,L_{i})-d(v, L_{i}),d(L_{i},v)-d(L_{i},u)\}$  
provides a lower-bound for the graph distance $d(u,v)$.  
Later works \cite{Potamias2009} 
showed that landmarks may also be used for providing upper-bounds on the graph distance between any two vertices. 
Thus, 
landmarks provide a fast and efficient way to approximate graph distances, according to Eq.~\ref{eq:lower_bound} and \ref{eq:upper_bound}.

\begin{equation}
d(u,v) \geq max_{L_{i}}max\{ d(u{,}L_{i}) - d(v{,} L_{i}),d(L_{i}{,}v) - d(L_{i}{,}u)\}
\label{eq:lower_bound}
\end{equation}
\vspace{-2em}
\begin{equation}
d(u,v) \leq min_{L_{i}}(d(u,L_{i})+d(L_{i},v))
\label{eq:upper_bound}
\end{equation}

ALT combines the classic \astar\ algorithm \cite{A-star68} with the aforementioned lower-bounds. 
For  bidirectional search, ALT uses the average potential function \cite{ikeda1994} defined as $p_{f}(v)=(\pi_{f}(v) - \pi_{r}(v))/2$ for the forward and $p_{r}(v)=(\pi_{r}(v) - \pi_{f}(v))/2 = -p_{f}(v)$ for the backward search.

\vspace{-2pt}
\subsection{Graph separators}
\label{subsec:crp_algo}

In Graph Separator (GS) methods, such as CRP \cite{delling2011,delling2013e}, a partition $C$ of the graph is computed. Then, during preprocessing, we build an \emph{overlay graph} $H$ containing all boundary vertices and arcs of \emph{G}. It also contains a clique for each cell~\emph{c}: for every pair $(u,v)$ of boundary vertices in $c$, a clique arc $(u,v)$ is created whose cost is the same as the shortest path (restricted to the inner edges of~\emph{c}) between $u$ and $v$.  
For a SP query between $s$ and $t$, the Dijkstra algorithm must be run on the graph consisting of the union of $H$, $c^{0}(s)$ and $c^{0}(t)$. To further accelerate queries, we may use multiple levels of overlay graphs. 
Since each clique is calculated by using only the inner edges of \emph{c}, GS preprocessing may be easily parallelized. Moreover, overlay graphs of higher level partitions may be computed by using the overlay graphs of lower levels. 
By using those two optimizations, CRP is the most efficient SPSP algorithm in terms of preprocessing time (requiring few seconds for continental road networks) and is thus suitable for dynamic road networks. 

\vspace{-2pt}
\subsection{Single-source shortest-path queries}
\label{subsec:sssp-queries}

Recently, Efentakis et al.~\cite{efentakis2014} expanded graph separators and proposed GRASP, a novel set of algorithms for handling all variants of single-source shortest-path queries, including one-to-all, one-to-many and range queries. All three algorithms, namely GRASP (one-to-all), isoGRASP (range) and reGRASP (one-to-many) use the exact same data structures and share all the advantages of graph-separator methods, such as very short preprocessing times and excellent parallel query performance. Unfortunately, parallel reGRASP requires a few ms for one-to-many queries on continental road networks and hence is not fast enough for handling $k$-NN queries.
  
\vspace{-2pt}
\subsection{$k$-NN queries}
\label{subsec:$k$-NN-queries}

There are many works on $k$-NN queries for static objects on road networks. Unfortunately, 
even latest attempts, such as G-tree \cite{zhong2013}  cannot \emph{scale for continental road networks},
\emph{requiring 16.8 hours of preprocessing for the full USA network}. 
Moreover, previous index-based approaches require a \emph{target selection phase} to index which tree-nodes contain objects (a process requiring few seconds) and hence, they cannot be applied in case of moving objects. Thus, all previous solutions are quite unsuitable for practical applications and dynamic road networks. 

There is also significant work around $k$-NN queries for moving objects on road networks.  
Still, most of those approaches are either disk-based \cite{wang2011}, have not been tested on continental road networks \cite{jensen2003,mouratidis2006b,wang2011} and cannot address dynamic live-traffic road networks. 
Recently, CRP was also expanded~\cite{delling2013h} to handle $k$-NN queries. Unfortunately, it shares the limitations of previous methods, since (i) it also requires a target selection phase and therefore cannot be applied to moving objects and (ii) it may only perform well for objects near the query location (otherwise the whole upper level of the overlay graph must be traversed). Hence, this solution is also far from optimal.

\section{The SALT framework}
\label{sec:contribution}

The main contribution of this work is to propose SALT (graph Separators + ALT algorithm), a unified framework for answering point-to-point, single-source (one-to-all, one-to-many and range) and especially $k$-NN queries which are not handled efficiently by existing approaches. The main advantage of SALT is, that the exact same data structures may service all the different type of SP queries and hence, SALT may be easily integrated into commercial, real-world applications. What follows is a detailed discussion of the SALT framework.

\begin{figure*}[!tb]
\centering
 \subfigure[A sample graph $G$. $|V|{=15}$, $L{=}2$, $|C^L|{=}2$]
{\includegraphics[width=0.158\textwidth]{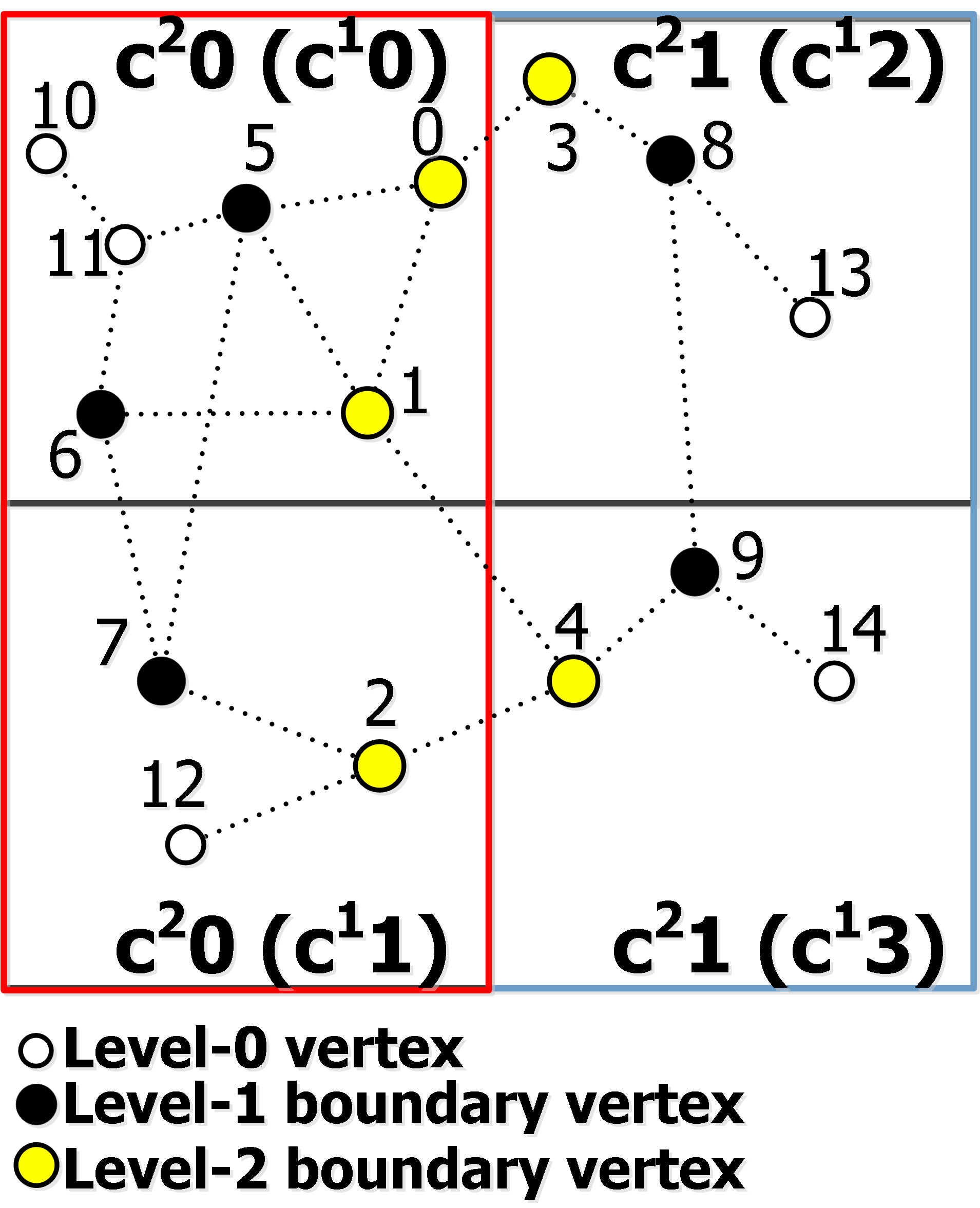}\label{fig:orig}}\quad
 \subfigure[Building the level-1 overlay graph]
{\includegraphics[width=0.158\textwidth]{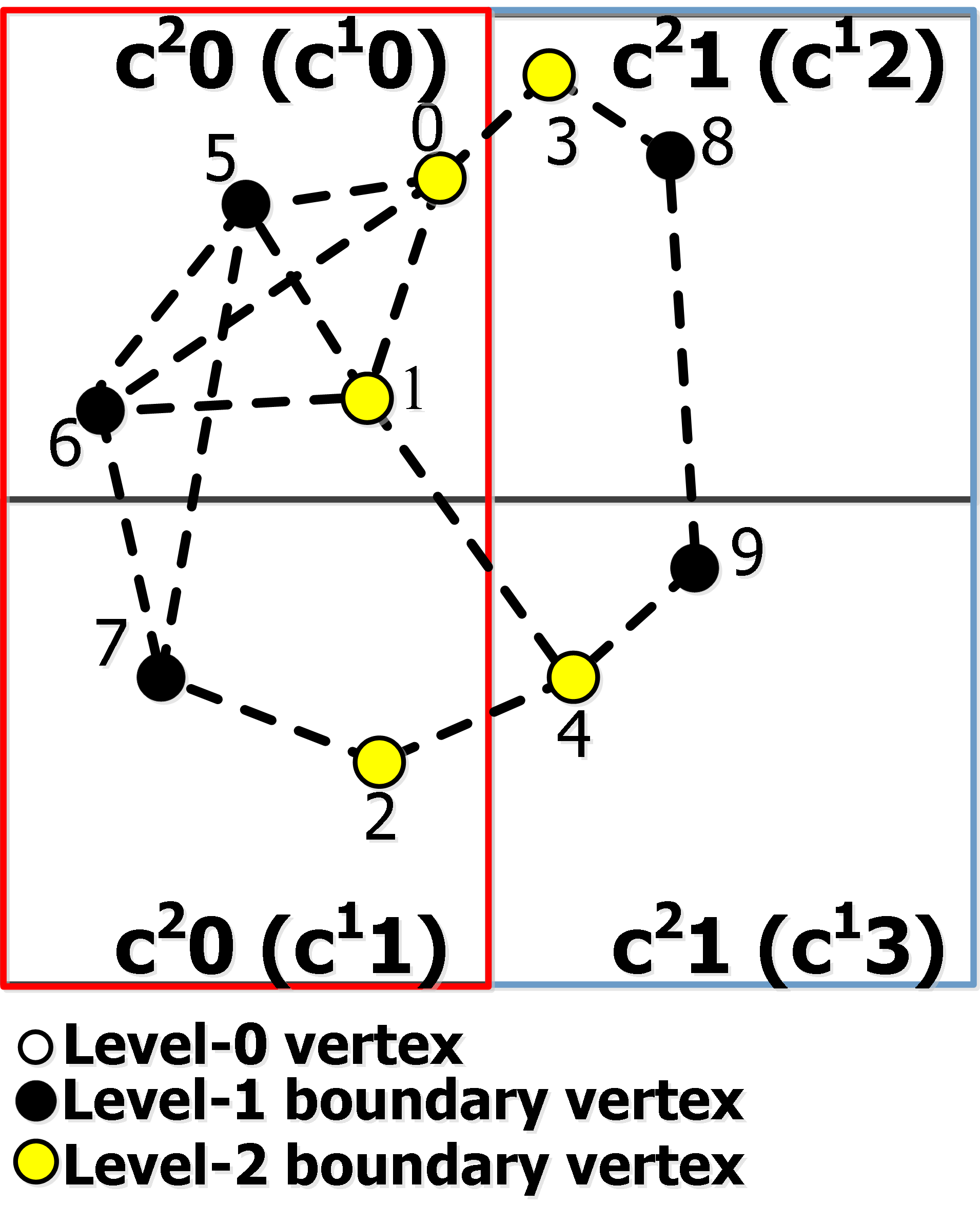}\label{fig:overlay1}}\quad
 \subfigure[Building the level-2 overlay graph]
{\includegraphics[width=0.158\textwidth]{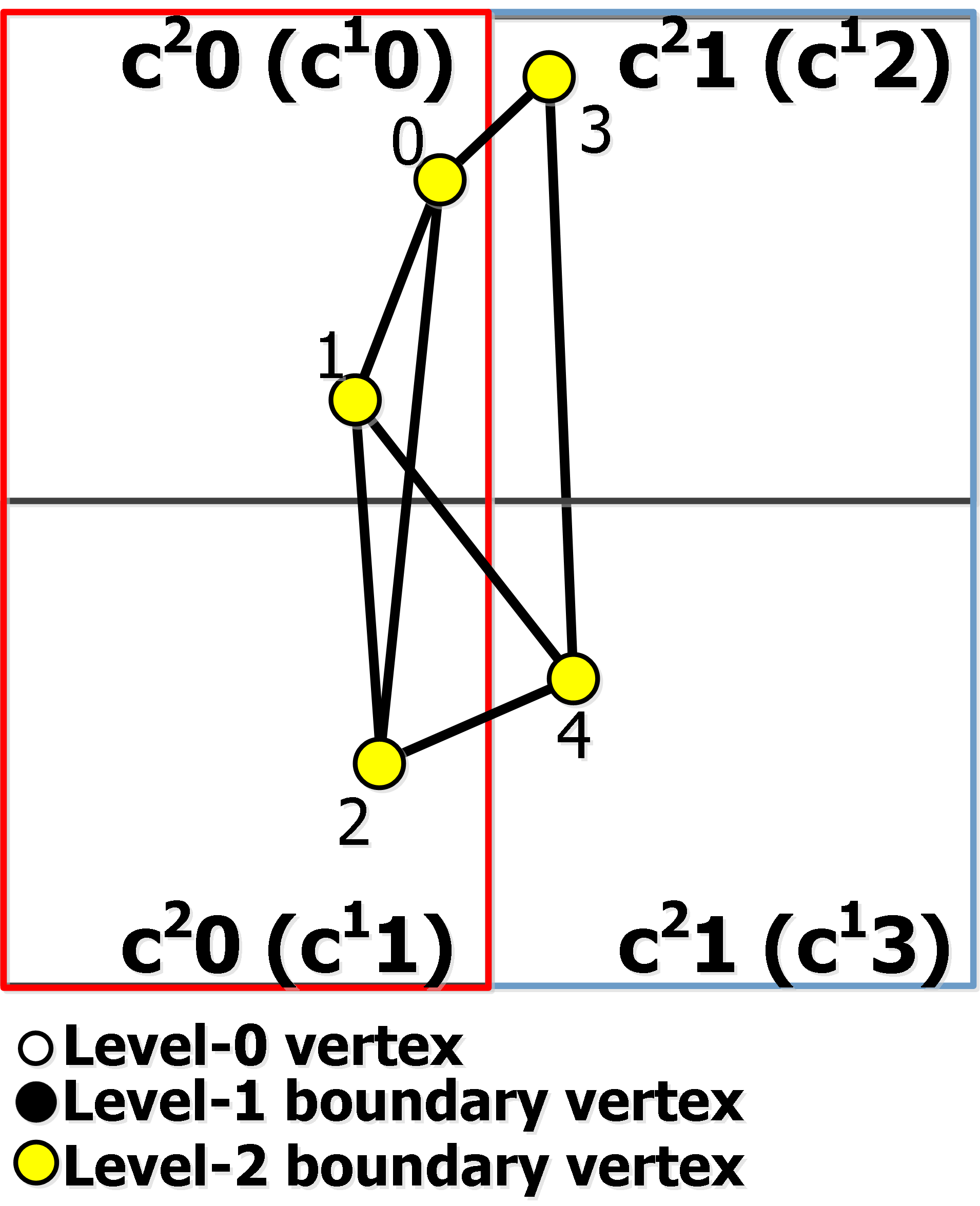}\label{fig:overlay2}}\quad
 \subfigure[Downward arcs and the $G_{GS}{\downarrow}$ graph]
{\includegraphics[width=0.158\textwidth]{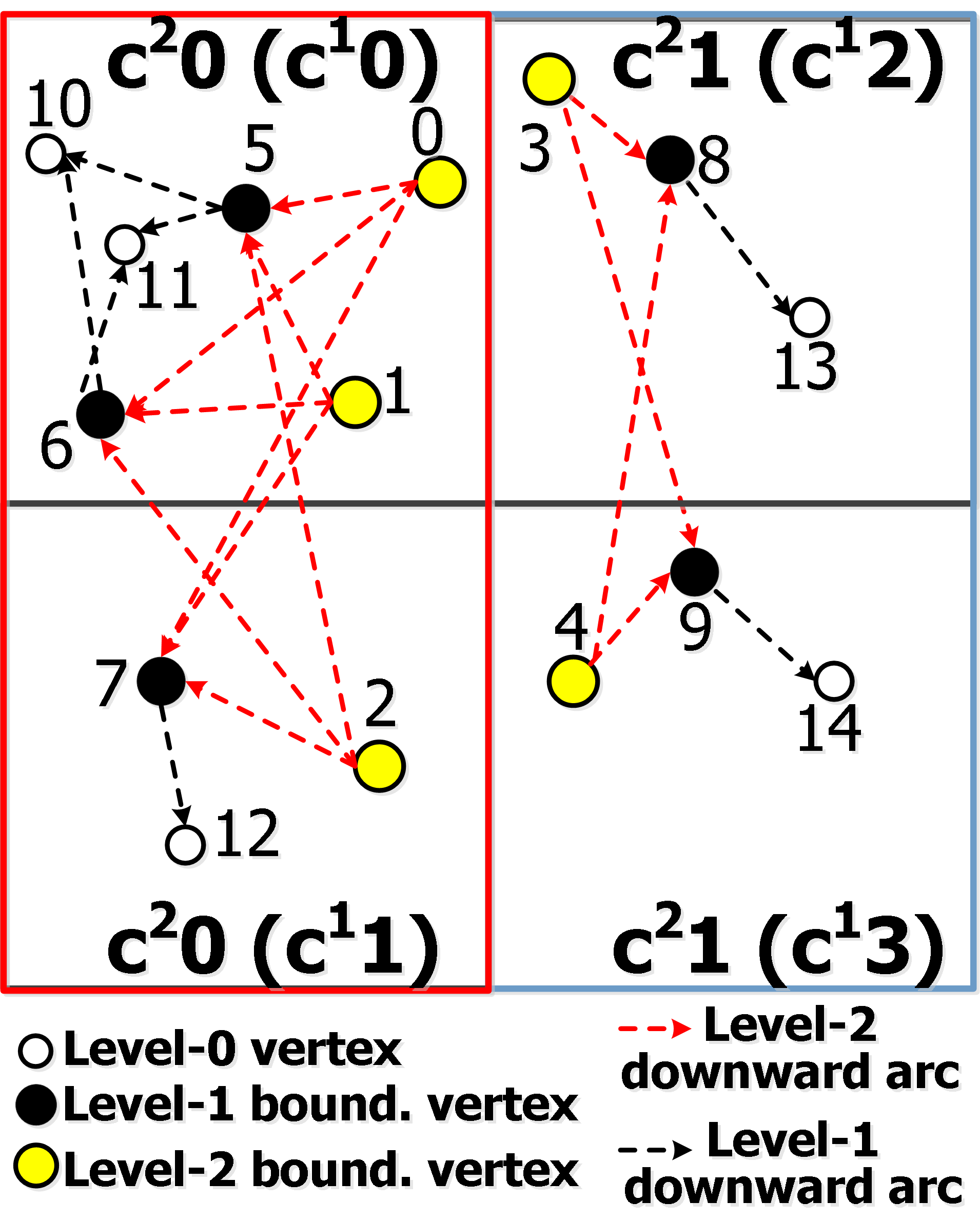}\label{fig:downward_graph}}
 \caption{SALT's GS customization phase. Building the overlay graph $H$ and the $G_{GS}{\downarrow}$ graph}
  \label{fig:GRASP_graphs}
\end{figure*}

\subsection{Preprocessing}
\label{subsec:preprocessing}

SALT's preprocessing consists of two distinct phases, (i) the \emph{graph-separator (GS) phase} and (ii) the \emph{landmarks preprocessing phase}.

The \emph{GS phase} of SALT mimics the preprocessing of GRASP \cite{efentakis2014} (see Fig.~\ref{fig:GRASP_graphs}). During this phase, we use the Kafpaa/Buffoon \cite{sanders2011e} partitioning tool to create nested multilevel partitions of the road network graph in a top-down fashion. This initial partitioning phase is \emph{metric independent} and needs to be executed only once, i.e., even in the case of arc-weights changes or for different metrics.
Following partitioning, the \emph{customization stage} builds the overlay graph $H$ containing all boundary vertices and arcs of $G$. The graph $H$ also contains a clique for each cell $c$: for every pair $(u,v)$ of boundary vertices in $c$, we create a shortcut arc $(u,v)$ whose cost is the same as the shortest-path (restricted to inner edges of $c$) between $u$ and $v$ (see Fig.~\ref{fig:overlay1},~\ref{fig:overlay2}).  
Similar to \cite{efentakis2014}, we also calculate the SP distances between all border vertices of level $\ell$ and all vertices of level $\ell{-}1$ within each cell $c^\ell$ (see Fig.~\ref{fig:downward_graph}). To differentiate between the two kinds of arcs computed, we will denote as (i) \emph{clique arcs} the added overlay arcs that connect border vertices of the same level $\ell$ and (ii) \emph{downward arcs} of level $\ell$ the vertices connecting different levels, i.e., $\ell$ and $\ell{-}1$.  
For added efficiency, \emph{downward arcs} are stored as a separate graph, 
referred to, as $G_{GS}{\downarrow}$. Both types of arcs are computed bottom-up and starting at level one. To process a cell, the GS customization stage for SALT executes a Dijkstra algorithm from each boundary vertex of the cell. 
We also apply the arc-reduction optimization of \cite{efentakis2012}, which reports only distances of boundary vertices that are direct descendants of the root of each executed Dijkstra algorithm. 

Although SALT's GS preprocessing phase is conceptually similar to GRASP, there are two major differences. 
(i)~For accelerating SALT's preprocessing, $H$ and $G_{GS}{\downarrow}$ have the same number of levels ($L=6$ in our experiments) with $|C^L| =16$ (cf. the original GRASP paper with $|C^L| =128$ and $L=16$). 
Using a smaller number of cells at the upper level of the cell hierarchy slightly lowers one-to-all query parallel performance, but \emph{accelerates point-to-point queries and reduces preprocessing time}. Hence, it is a very logical compromise, since our focus is on increased versatility. 
(ii)~Moreover, we have to repeat SALT's GS customization stage twice, one for the forward and one for the reverse graph. This is necessary for the landmarks phase of SALT, but it also allows to answer, both, forward and reverse single-source queries. Thus, at the end of SALT's GS preprocessing we have built two versions of the overlay graphs, $H$ and $G_{GS}{\downarrow}$, one for forward and one for reverse graph queries, respectively.  

The \emph{landmarks preprocessing phase} for SALT extends the preprocessing proposed by~\cite{efentakis2013d}, which  optimized and tailored the ALT algorithm for use with dynamic road networks. Landmarks are selected by the \emph{partition - corners} landmarks selection strategy, in which the we use the cells created by Kafpaa and from each cell we select the four corner-most vertices as landmarks. 
For SALT, we accelerate the computation of distances of all graph vertices from and to landmarks by executing two sequential GRASP algorithms (forward and reverse) instead of using plain Dijkstra (as in all previous approaches). 
Moreover, we may perform those $2 {\times} |S|$ GRASP algorithms in parallel. By using these optimizations, the landmarks preprocessing phase of SALT never takes more than \emph{$4s$ for 24 landmarks} and is therefore \emph{at least $6\times$ faster than any existing work}.

Conclusively, at the end of the preprocessing stage of SALT, we have built the overlay graphs $H$ and $G_{GS}{\downarrow}$ for both forward and reverse searches and calculated distances for all vertices from and to the selected landmarks. 
This is all the information required for answering point-to-point, single-source and $k$-NN queries. For \emph{dynamic road networks},
we only need to repeat the GS customization stage and the computation of distances of all vertices from and to the landmarks. Both these phases \emph{require less than $19s$} for the benchmark road networks we used. This makes SALT suitable for dynamic scenarios. 

\subsection{Single-pair shortest-path queries}
\label{subsec:p2p_sp_queries_contrib}

Using the SALT preprocessing data, we can accelerate point-to-point shortest-path queries by combining our custom CRP (with arc-reduction) with the goal-direction technique of the ALT algorithm.  In CRP, to perform a SP query between $s$ and $t$, Dijkstra's algorithm must be run on the graph consisting of the union of $H$, $c^{0}(s)$ and $c^{0}(t)$. The difference in SALT is that, instead of Dijkstra, we use the ALT algorithm on the graph consisting of the union of  $H$, $c^{0}(s)$ and $c^{0}(t)$. Note that both ALT and CRP may also be used in, either, a unidirectional or a bidirectional setting. A similar combination of CALT \cite{bauer2010} and CRP was unofficially introduced in \cite{delling2011}, which uses the landmark derived lower-bounds only on the upper-level of the graph-separator overlay graph. Therefore, \emph{local searches} could not be accelerated. 
Local search is crucial for \knn\ queries, since the \knn\ results for small values of $k$ are usually located close to the query location.  In contrast, our \emph{SALT-p2p algorithm}, combining pure ALT and SIMD instructions for lower bound calculations
and our custom CRP, is going to be significantly more efficient than stand-alone ALT or CRP. 
In addition, since both methods are extremely robust to the metric used \cite{bauer2010,delling2011}, their combination will provide excellent performance for both travel times and travel distances.

\begin{theorem}
The SALT-p2p algorithm for SPSP queries is correct.
\end{theorem}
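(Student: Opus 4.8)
The plan is to reduce correctness of SALT-p2p to two facts that are essentially already available: (i) the correctness of our custom CRP, i.e.\ that a shortest-path computation on the union graph $G^{*} := H \cup c^{0}(s) \cup c^{0}(t)$ returns the true distance $d_{G}(s,t)$, and (ii) the correctness of the ALT algorithm, i.e.\ that $A^{*}$ with a feasible potential function finds exact shortest paths on whatever graph it is run on. The only genuinely new point is that SALT's landmark potentials are precomputed with respect to the \emph{original} graph $G$, whereas the $A^{*}$ search is executed on the contracted graph $G^{*}$; hence the core of the proof is to check that these potentials remain feasible on $G^{*}$.

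First I would recall (or reprove, following \cite{delling2011,efentakis2014}) the CRP invariant: each clique arc $(u,v)$ inside a cell $c$ carries the weight of a shortest path between $u$ and $v$ using only inner edges of $c$, and each downward arc of level $\ell$ carries a within-cell shortest-path distance between a level-$\ell$ boundary vertex and a level-$(\ell-1)$ vertex. Consequently every path in $G^{*}$ corresponds, after expanding shortcuts, to a path in $G$ of the same length, and every $s$--$t$ path in $G$ is witnessed by one in $G^{*}$, so $d_{G^{*}}(s,t) = d_{G}(s,t)$. The arc-reduction of \cite{efentakis2012} does not affect this, as it only discards arcs dominated by two-hop combinations and thus never changes distances.

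Next I would establish feasibility of the potential on $G^{*}$. The key observation is the monotonicity $w^{*}(u,v) \ge d_{G}(u,v)$ for \emph{every} arc $(u,v)$ of $G^{*}$: original arcs of $c^{0}(s)$, $c^{0}(t)$ and boundary arcs satisfy $w^{*}=w$, and every shortcut (clique or downward) arc equals the length of an actual path in $G$, hence is at least $d_{G}(u,v)$. On the other hand, the landmark potential $\pi$ towards a fixed target $t$, $\pi(v) = \max_{L_{i}}\max\{\, d(v,L_{i}) - d(t,L_{i}),\ d(L_{i},t) - d(L_{i},v)\,\}$, is \emph{consistent} in $G$: each inner expression is $1$-Lipschitz with respect to $d_{G}$ by the triangle inequalities behind Eq.~\ref{eq:lower_bound}, i.e.\ $d(u,L_{i}) - d(v,L_{i}) \le d_{G}(u,v)$ and $d(L_{i},v) - d(L_{i},u) \le d_{G}(u,v)$, and a pointwise maximum of consistent potentials is consistent. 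Combining, for every arc $(u,v)$ of $G^{*}$ we get $\pi(u) - \pi(v) \le d_{G}(u,v) \le w^{*}(u,v)$, so the reduced costs $w^{*}(u,v) - \pi(u) + \pi(v)$ are nonnegative and $\pi$ is feasible on $G^{*}$. The standard correctness of $A^{*}$ / Dijkstra on nonnegative reduced costs then yields that SALT-p2p returns $d_{G^{*}}(s,t) = d_{G}(s,t)$. For the bidirectional variant I would additionally invoke the standard argument of \cite{ikeda1994,goldberg2005}: the average potential $p_{f} = (\pi_{f} - \pi_{r})/2$, $p_{r} = -p_{f}$ is feasible for both the forward search on $G^{*}$ and the reverse search on the reversed $G^{*}$ (the same monotonicity applies to the reverse overlay built in preprocessing), and with the usual stopping criterion the bidirectional search is exact; the SIMD-based lower-bound evaluation computes exactly the same $\pi$ values and is immaterial.

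The step I expect to require the most care is precisely the interaction between the two preprocessing layers: the potentials live on $G$ but the search lives on $G^{*}$, and one might fear that a shortcut arc could "shortcut below" a landmark lower bound and break admissibility. The monotonicity $w^{*} \ge d_{G}$ is what rules this out, and it hinges on SALT's shortcut weights being genuine $G$-path lengths rather than optimistic estimates. A secondary delicate point is the bidirectional stopping rule together with the consistency of the combined potential across the forward and reverse overlays; here I would be careful to state the balance condition and the termination test exactly as in prior ALT work rather than re-derive them.
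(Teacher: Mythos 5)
Your proposal is correct and follows the same overall route as the paper's proof, namely reducing correctness of SALT-p2p to the correctness of its two constituents, CRP and ALT. The difference is one of substance: the paper's proof is a single sentence asserting that the combination of two correct algorithms is correct, whereas you identify and actually discharge the one point where that assertion is not automatic --- the landmark potentials are computed on the original graph $G$ while the \astar\ search runs on the union graph $G^{*}=H\cup c^{0}(s)\cup c^{0}(t)$, whose shortcut arcs have weights different from any arc of $G$. Your monotonicity observation $w^{*}(u,v)\geq d_{G}(u,v)$ for clique and downward arcs, combined with the $1$-Lipschitz property of the landmark potential from Eq.~\ref{eq:lower_bound}, is precisely the lemma needed to conclude that the reduced costs on $G^{*}$ are nonnegative; the paper leaves this implicit. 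Your additional care about the bidirectional stopping criterion with the average potential of \cite{ikeda1994} is likewise warranted and not addressed in the paper. In short, your argument proves what the paper only asserts, and it buys an explicit identification of the hypothesis (shortcut weights are genuine $G$-path lengths, hence at least $d_G$) on which the whole combination rests.
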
 

\begin{proof}
Since the SALT-p2p algorithm for SPSP queries combines CRP and ALT, which both provide optimal results then the SALT-p2p algorithm also provides correct results.
\end{proof} 

\subsection{$k$-NN queries}
\label{subsec:$k$-NN_queries_contrib}
 
SALT's preprocessing data can be used to answer \knn\ queries. 
Instead of initiating a \knn\ search from a query location $s$ to objects $O$, we start a search from all the objects at the same time \emph{to} the query location in the reverse graph. Hence, we take advantage of, both, GS and ALT acceleration to guide the search towards the query location. 
The SALT-kNN algorithm's query phase may be divided in two independent stages. 
The \emph{Pruning phase} excludes objects that cannot possibly belong to the \knn\ set by using the upper and lower-bounds provided by the landmarks preprocessing data.
The \emph{Main phase} executes a unidirectional SALT-p2p algorithm in the reverse graph from all remaining objects at the same time to the query location until the query location is settled. 
Now we have found the first nearest neighbor. This process has to be repeated another $k-1$ times until all $k$-NN are discovered.
The algorithm is detailed in the following.

\vspace{5 pt}
\textbf{Pruning phase}. To prune objects that are too far away from the query location and thus cannot belong to the k-nearest neighbors set, we 
(i)~calculate the $k$-th lowest upper-bound of graph distances between the query location and the objects (cf. Equation~\ref{eq:upper_bound}) and 
(ii)~pruning/exclude objects whose distance lower-bounds between them and the query location  (cf. Equation~\ref{eq:lower_bound}) exceed the $k$-th lowest upper-bound. To the best of our knowledge, this is the \emph{first work to utilize upper and lower landmark bounds in the context of \knn\ queries}. 

\begin{theorem}
The pruning phase of the SALT-kNN algorithm is correct.
\end{theorem}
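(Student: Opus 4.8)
The plan is to show that the pruning phase never discards an object that belongs to the true $k$-nearest-neighbor set. The argument rests entirely on the soundness of the landmark bounds established in Section~\ref{subsec:alt_algor}: Equation~\ref{eq:upper_bound} gives a valid \emph{upper} bound $\overline{d}(s,o) := \min_{L_i}(d(s,L_i)+d(L_i,o)) \geq d(s,o)$ for every object $o \in O$, and Equation~\ref{eq:lower_bound} gives a valid \emph{lower} bound $\underline{d}(s,o) \leq d(s,o)$. Let $U$ denote the $k$-th smallest value among $\{\overline{d}(s,o) : o \in O\}$ computed in step~(i). The pruning rule in step~(ii) removes exactly those $o$ with $\underline{d}(s,o) > U$.

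First I would argue that at least $k$ objects survive pruning, so the phase is well-defined and cannot return fewer than $k$ candidates: by definition of $U$ as the $k$-th smallest upper bound, there exist at least $k$ objects $o$ with $\overline{d}(s,o) \leq U$; for each such object $\underline{d}(s,o) \leq d(s,o) \leq \overline{d}(s,o) \leq U$, so none of them is pruned. (If ties make more than $k$ objects attain the threshold, all the more survive.) Second, and this is the crux, I would show no true nearest neighbor is lost. Let $o^\star$ be any object in the true $k$-NN set, and let $o_1,\dots,o_k$ be the $k$ objects realizing the $k$ smallest \emph{exact} distances $d(s,\cdot)$, so $d(s,o^\star) \leq d(s,o_k)$. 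I claim $d(s,o_k) \leq U$: among $o_1,\dots,o_k$, pick the one, say $o_j$, whose upper bound $\overline{d}(s,o_j)$ is smallest; then at most $k-1$ objects total can have a \emph{strictly smaller} upper bound than $\overline{d}(s,o_j)$ while also — no, more carefully: $U$ is the $k$-th smallest upper bound over \emph{all} of $O$, and since $\overline{d}(s,o_i) \geq d(s,o_i)$ the set $\{o_1,\dots,o_k\}$ contributes $k$ upper bounds, hence the $k$-th smallest upper bound over all of $O$ is at most $\max_i \overline{d}(s,o_i)$ — which is not yet what we want. The clean way: there are at least $k$ objects with exact distance $\le d(s,o_k)$, hence (since they may have larger upper bounds) this does not immediately bound $U$; instead use that $U \ge$ the $k$-th smallest exact distance is false in general. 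The correct and simple chain is: $\underline{d}(s,o^\star) \le d(s,o^\star) \le d(s,o_k)$, and separately $d(s,o_k) \le \overline{d}(s,o_k)$; but we need $\underline{d}(s,o^\star) \le U$. Since $o_1,\dots,o_k$ all satisfy $d(s,o_i) \le d(s,o_k)$, and $U$ is the $k$-th smallest of the $\overline{d}$'s, we have $U \ge$ (the $k$-th smallest exact distance) only if upper bounds were exact. So I would instead take $U' := \min_{i=1}^{k}\overline{d}(s,o_i) \ge d(s,o^\star)$ is wrong too.

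Let me restate the key step cleanly, since this is where care is needed. Because $\overline{d}(s,o_i)\ge d(s,o_i)$ for all $i$ but that inequality points the wrong way, the honest claim is: the pruning is correct because we compare each object's \emph{lower} bound against the $k$-th smallest \emph{upper} bound $U$. If $o$ is pruned then $d(s,o) \ge \underline{d}(s,o) > U \ge \min$-over-some-$k$-subset of exact distances. Concretely, there are $\ge k$ objects $o'$ with $\overline{d}(s,o') \le U$, hence $\ge k$ objects with $d(s,o') \le \overline{d}(s,o') \le U < \underline{d}(s,o) \le d(s,o)$; thus $o$ has at least $k$ objects strictly closer to $s$, so $o \notin$ true $k$-NN set. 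This is the contrapositive we want, and it is clean. I would present the proof in exactly this contrapositive form: assume $o$ is pruned, exhibit the $k$ witnesses $o'$ with $\overline{d}(s,o') \le U$, chain the three inequalities $d(s,o') \le \overline{d}(s,o') \le U < \underline{d}(s,o) \le d(s,o)$, and conclude $o$ is not among the $k$ nearest. Combined with the first observation that $\ge k$ objects survive, correctness of the pruning phase follows. The main obstacle is purely the bookkeeping around ties and the direction of the bound inequalities — the geometry is trivial once the contrapositive is set up, and no property of the road network, the partition, or $G_{GS}{\downarrow}$ is needed beyond the validity of Equations~\ref{eq:lower_bound} and~\ref{eq:upper_bound}.
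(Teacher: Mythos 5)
Your final argument is correct and is essentially the paper's own (two-sentence) proof made rigorous: the $k$-th lowest upper bound $U$ certifies at least $k$ objects at true distance $\le U$, so any object whose lower bound exceeds $U$ cannot be among the $k$ nearest. The several false starts in the middle of your writeup should be deleted, but the contrapositive chain $d(s,o') \le \overline{d}(s,o') \le U < \underline{d}(s,o) \le d(s,o)$ you end with is exactly the right (and the paper's) reasoning.
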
 

\begin{proof}
When we calculate the $k$-th lowest upper-bound of distances between the query location and the objects, we can guarantee that there are at least $k$-neighbours within this distance from the query location. So, any object located farther than that (as provided by the landmarks provided lower-bounds) may be safely pruned.  
\end{proof}

To accelerate the process of computing the $k$-th lowest upper-bound between the query location and the objects we can use a bounded max-heap $Q$ that only stores $k$-upper-bounds and procedure \proc{getKthLowestUpperBound}: 

\noindent{
\begin{minipage}{1\textwidth}

\begin{minipage}[t]{.5\textwidth}
{\small
\begin{codebox}
\Procname{$\proc{getKthLowestUpperBound}(s, O)$}
\li $Q \gets emptyMaxHeap$
\li $m \gets 0$
\li \For each $obj$ in $O$ 
\li \Do
\If $m < k$
\li \Then
$Q.push(upperBoundDist(s,obj))$ 
\li $m \gets m+1$
\li \ElseIf ($upperBoundDist(s,obj) < Q.top()$)
\li \Then
$Extract-max(Q)$
\li $Q.push(upperBoundDist(s,obj))$
\End
\End
\li \Return $Extract-max(Q)$
\End
\end{codebox}
}
\end{minipage}
\begin{minipage}[t]{.5\textwidth}
{\small
\begin{codebox}
\Procname{$\proc{PruningPhase}(s, O)$}
\li $\id{O_{small}} =\{\}$
\li $\id{kthUpperBound} \gets getKthLowestUpperBound(s, O) $
\li \For each $obj$ in $O$ 
\li \Do
\If $\id{lowerBoundDist(s,obj)}  \leq \id{kthUpperBound} $
\li \Then
$\id{O_{small}}.add(obj)$
\End
\End
\li \Return \id{O_{small}}
\End
\end{codebox}
}
\end{minipage}
\end{minipage}

\vspace{5pt}
Since the  bounded max-heap $Q$ only stores $k$-upper-bound distances, we only need to compare the next objects's upper-bound with the top of the heap. If we have found a lower upper-bound, we remove the top of the heap and add the new upper-bound to $Q$. 
At the end of the procedure, the top of the max-heap is the $k$-th lowest upper bound of distances between the query location and the objects. 
The pruning phase is now implemented by procedure \proc{PruningPhase}.

At the end of the pruning phase, instead of using the objects in $O$, we only need to check for the $k$-nearest neighbors within the objects in $O_{small}$. Our experimentation has shown that the pruning phase is very effective, since it efficiently prunes more than $60\%$ of the total number of objects in $O$.   

\begin{wrapfigure}{l}{0.5\textwidth}
\vspace{-10pt}
{\noindent
\begin{minipage}[t]{.5\textwidth}
{\small
\begin{codebox}
\Procname{$\proc{MainPhase}(s, \id{O_{small}}, k, \overline{G})$}
\li \For $i \gets 0$ \To $k{-}1$
\li \Do
$T'=$new vertex
\li \For each $obj \in \id{O_{small}}$ 
\li \Do
Connect $T'$ to $obj$ with zero-weight edges 
\End
\li $(\id{iNN}, \id{iNNdist})  \gets SALT{-}p2p(T'{,}s,\overline{G})$
\li $\id{O_{small}} \gets \id{O_{small}} - \id{iNN}$
\End
\End
\end{codebox}
}
\end{minipage}
}
\end{wrapfigure}

\vspace{5 pt}
\textbf{Main phase}. Following the pruning phase, to find the first nearest neighbor we start by performing a search simultaneously from all objects in $O_{small}$ to the query location in the reverse graph. 
To do so, we use the idea of \cite{maue2006}. We add a new vertex $T'$ connected to all objects  in $O_{small}$ using zero-weight edges and then perform a unidirectional SALT-p2p algorithm from $T'$ to the query location $s$ in the reverse graph (see Figure~\ref{fig:salt-knn}). 
At the end of this process, we have found the first nearest neighbor of query location $s$. Then we eliminate this vertex from $O_{small}$ and repeat the process for another $k{-}1$ iterations to retrieve the full \knn\ set (see procedure \proc{MainPhase}).

\begin{theorem}
The main phase of the SALT-kNN algorithm is correct.
\end{theorem}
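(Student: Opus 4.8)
The plan is to prove the claim by induction on the iteration counter $i$ of the main loop. Write $O^{*}$ for the value of $O_{small}$ at the end of the pruning phase, and let $o_0, o_1, \dots, o_{k-1}$ be a listing of the objects of $O$ in non-decreasing order of graph distance $d(s,\cdot)$ from the query location (ties broken arbitrarily). The invariant I would maintain is: at the start of iteration $i$, $O_{small} = O^{*}\setminus\{o_0,\dots,o_{i-1}\}$, and in particular $O_{small}$ still contains $o_i, \dots, o_{k-1}$. The base case $i=0$ is exactly the preceding theorem (correctness of the pruning phase): $O^{*}$ contains all $k$ nearest neighbours of $s$. For the inductive step I would show that the single call $SALT{-}p2p(T',s,\overline{G})$ in iteration $i$ returns a pair $(\id{iNN},\id{iNNdist})$ with $\id{iNNdist}=\min_{o\in O_{small}} d(s,o)$ and $\id{iNN}\in O_{small}$ an object attaining this minimum; since every object strictly closer to $s$ than $o_i$ has already been removed, the minimizer is (a valid choice of) $o_i$, so deleting $\id{iNN}$ re-establishes the invariant for $i+1$.

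The heart of the argument is thus the correctness of one iteration, which I would split into three facts. First, the \emph{super-source reduction} of Maue et al.: because $T'$ is joined to every $o\in O_{small}$ by a zero-weight arc while every original arc has strictly positive weight, any shortest path out of $T'$ in $\overline{G}\cup\{T'\}$ uses exactly one of these arcs, whence $d_{\overline{G}\cup\{T'\}}(T',v)=\min_{o\in O_{small}} d_{\overline{G}}(o,v)$ for all $v$, and back-tracing the shortest-path tree from $v$ to $T'$ identifies a minimizing object. Second, the \emph{reverse-graph identity} $d_{\overline{G}}(o,s)=d_{G}(s,o)$, so the value computed is $\min_{o\in O_{small}} d_{G}(s,o)$, precisely the distance from $s$ to its nearest not-yet-reported object. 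Third, \emph{correctness of the engine}: by the first theorem of this section the SALT-p2p algorithm answers point-to-point queries exactly, and this is unaffected by using the virtual vertex $T'$ as source — the ALT potential of the (unidirectional) search depends only on the fixed target $s$, and the graph-separator search may be run on $H\cup\big(\bigcup_{o\in O_{small}} c^{0}(o)\big)\cup c^{0}(s)$ together with $T'$ and its zero arcs, to which the standard CRP correctness proof applies verbatim.

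Finally I would observe monotonicity: deleting an element from $O_{small}$ can only increase $\min_{o\in O_{small}} d_{G}(s,o)$, so the reported distances $\id{iNNdist}$ are non-decreasing in $i$. Combined with the invariant, this shows that the objects $\id{iNN}$ output over the $k$ iterations realise, in sorted order, the $k$ smallest values of $d_{G}(s,\cdot)$ over $O^{*}$, hence over all of $O$ by the pruning theorem. Therefore after $k$ iterations the main phase has returned a correct $k$-nearest-neighbour set of $s$.

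I expect the main obstacle to be the third fact above: stating precisely what ``run SALT-p2p from $T'$'' means and verifying that neither the overlay/graph-separator component nor the ALT lower bounds are invalidated by $T'$ lying outside the preprocessed structure. The super-source and reverse-graph reductions and the induction itself are routine once that point is nailed down.
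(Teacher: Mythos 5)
Your proof is correct and follows essentially the same route as the paper's: the super-source reduction of Maue et al., the reverse-graph identity, the assumed correctness of SALT-p2p, and iteration with removal of the found object. You are considerably more rigorous than the paper (explicit induction invariant, monotonicity of the reported distances, and the caveat about running SALT-p2p from the virtual vertex $T'$ lying outside the preprocessed overlay/landmark structure --- a point the paper's one-paragraph proof simply glosses over), but the underlying argument is the same.
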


\begin{proof}
As \cite{maue2006} has shown, by adding a new vertex $T'$ connected to a set of vertices and then running any correct shortest-path algorithm from  $T'$ to a destination vertex, we can find the minimum shortest-path distance between  the vertices set and the destination. As a result, since the SALT-p2p algorithm is correct and this algorithm is run in the reverse graph, at the end of the first iteration we have found the shortest-path distance between the query location~$s$ and one of the objects in $O_{small}$, which is the first nearest-neighbour. Since, we eliminate this vertex from $O_{small}$ and repeat this correct process for another $k{-}1$ times, the main phase of the SALT-kNN algorithm is correct.        
\end{proof}

To retrieve not only the shortest-path distance between the query location $s$ and the objects in $O_{small}$, but also the actual \knn\ vertices, 
we need to maintain for each labeled vertex a reference that points to the originating vertex in the objects' set $O_{small}$. Thus, when we extract the query location~$s$ from the priority queue and terminate the SALT-p2p algorithm at the $i$-th iteration, we know not only the $i$-th shortest-path distance but the $i$-th nearest neighbor as well.
Moreover for each object $o$ in $O_{small}$, we need to store the cell ID $c^{1}(o)$ of the cell this object belongs at the lowest level of the graph separator hierarchy, so to be able to traverse the overlay graph $H$ during each iteration of the SALT-p2p algorithm. 
Note it is sufficient to store only the $c^{1}(o)$, since cell IDs for higher levels may be calculated from that. 

\begin{wrapfigure}{l}{0.35\textwidth}
  \centering
    \includegraphics[width=0.25\columnwidth]{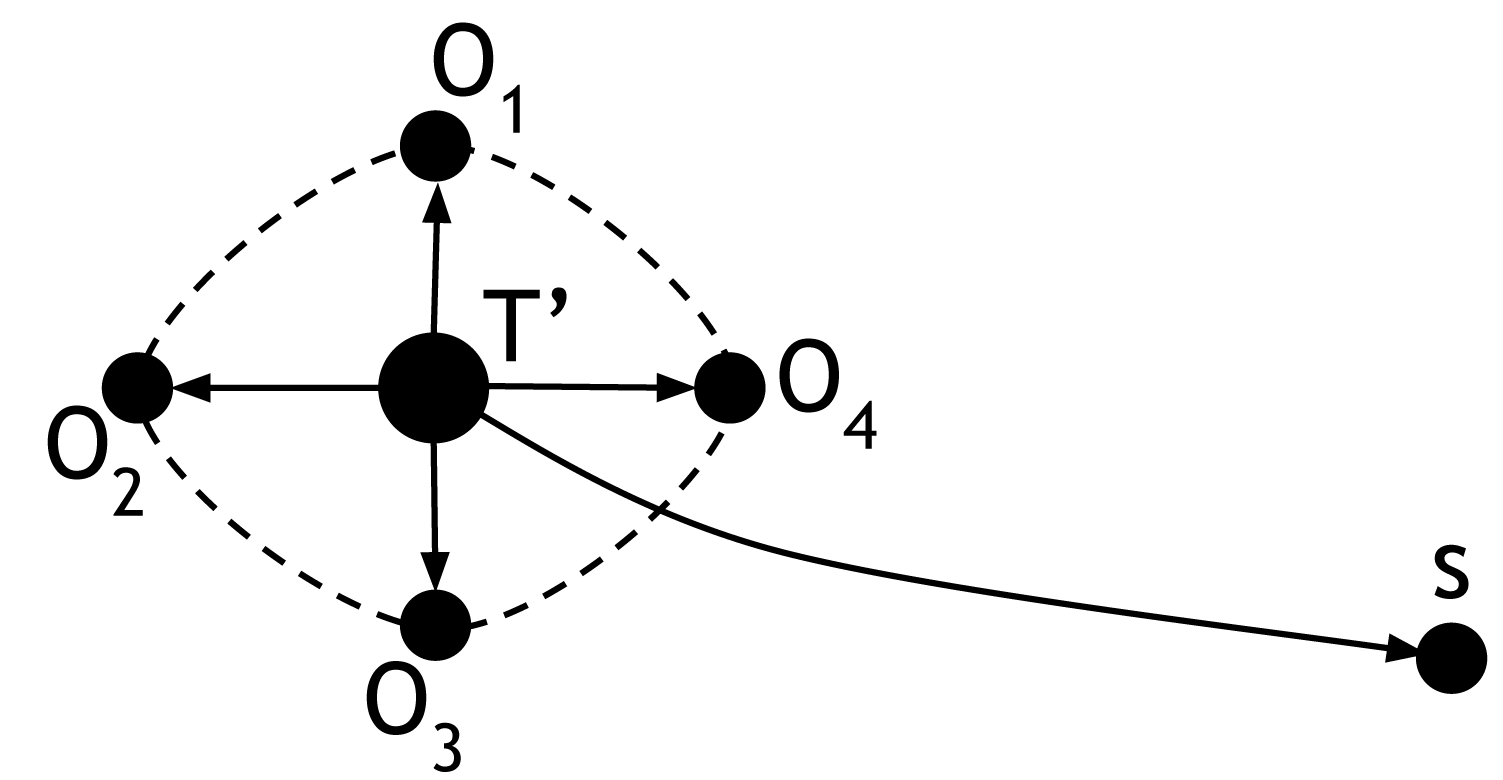}
  \caption{The $i$-th iteration of the SALT-kNN algorithm}
    \label{fig:salt-knn}
\end{wrapfigure}

Although the SALT-\kNN\ algorithm will be very fast for retrieving the first NN result object, it will become progressively slower when retrieving the additional $k-1$ NN vertices, since at each iteration, the SALT-p2p algorithm will start from scratch. 
To remedy this, at the beginning of the $i$-th iteration, we reload the corresponding priority queue with all vertices labeled during the $i{-}1$ iteration except those originating from the previous NN vertex found, since most of those labeled vertices were already assigned correct SP distances. 
For previously labeled vertices of which the SP distance can be improved, by using a min-heap priority queue (as all Dijkstra variants), the $i$-th iteration of the algorithm will further assign correct SP distances. 
This optimization significantly improves query times and still ensures correctness of the SALT-\kNN\ algorithm.

\subsection{Single-source shortest-path queries}
\label{subsec:sssp_queries_contrib}

Although our main contributions are
the SALT-p2p and the SALT-kNN algorithms for handling SPSP and $k$-NN queries, the SALT framework may still be used for other types of single-source shortest-path (SSSP) queries, including one-to-all, one-to-many and range queries by using the GRASP, reGRASP and isoGRASP algorithms presented in \cite{efentakis2014}. 
The major improvement in SALT is that by tweaking the number of cell levels ($L=6$) and the number of cells at the upper cell level ($|C^L|=16$), \emph{we may efficiently answer both forward and reverse SSSP queries} without increasing the preprocessing time significantly. As shown in previous works \cite{efentakis2013b}, this type of flexibility is extremely important for a wide range of geomarketing applications.

\subsection{SALT Tuning}
\label{subsec:salt_tuning}

Although SALT correctness for both p2p and \knn\ queries was rather straightforward to prove, we describe some of the necessary optimizations for an efficient SALT implementation. These optimizations include the most optimal schema for storing shortest-path distances of graph vertices \emph{from} and \emph{to} the landmarks, the use of SIMD instructions during the query phase of the SALT-kNN and SALT-p2p algorithms and how we assign IDs to vertices for fewer cache misses and acceleration of SALT's preprocessing and query phases. In detail:

\vspace{5 pt}
\textbf{Landmark Distance Records}. In \cite{efentakis2013d} landmark distances were stored in a 32-bit vector of size $2{\cdot}|S|{\cdot}|V|$. The distance of node with nodeID $i\in[0,|V|{-}1]$ from landmark number $j\in[0,|S|{-}1]$ was stored at position $2{\cdot}|S|{\cdot} i+2j$ and the distance of node $i$ to the landmark $j$ was stored in the next position ($2{\cdot}|S|{\cdot} i+2j+1$). Moreover, landmark distances \emph{from landmarks to nodes} were stored negated (as negatives), since this is how they are used for estimating lower-bounds. 
Although this storage schema facilitates fast calculation of lower-bounds, it is not optimal for calculating upper-bounds, as needed during the pruning phase of the SALT-kNN algorithm. 
To calculate the upper-bound of the distance between the query location $s$ and any object in $O$ (cf. Equation~\ref{eq:upper_bound}) we only need the distances from the object \emph{to} the landmarks and the distances \emph{from} the landmarks to the query location $s$. Thus, it is better to store landmark distances \emph{from} all landmarks on consecutive memory locations per vertex (and negated as before) and then the distances \emph{to} all landmarks. 
Hence, we use again a 32-bit vector of size $2{\cdot}|S|{\cdot}|V|$ for storing the landmark distances, but now the distance of vertex $i$ \emph{from} landmark $j$ is stored at position $2{\cdot}|S|{\cdot}i+j$ and the distance of node $i$ \emph{to} the landmark $j$ is stored in the position ($2{\cdot}|S|{\cdot} i+|S|+j$). 
With this optimization, to calculate the upper-bounds, we access $|S|$ consecutive memory locations per object instead of $2|S|$. Also, since landmark distances to vertices are stored negated, instead of addition, we use subtraction during the upper-bound calculation.  

\vspace{5 pt}
\textbf{Node Reordering}. Similar to previous works ~\cite{Baum2013, efentakis2014}, assigning smaller IDs to border vertices of higher levels and breaking ties within the same level by cell, has shown to improve performance for both SALT-p2p and SALT-\kNN\ algorithms. As a result, this is also the node-ordering of choice for our SALT framework.

\vspace{5 pt}
\textbf{SIMD Instructions}. Current x86-CPUs have special 128-bit SSE registers that hold four 32-bit integers and allow basic operations, such as addition, minimum, and maximum to be executed in parallel. 
Efentakis et al.~\cite{efentakis2013d} have utilized those 128-bit SSE registers to significantly accelerate the computation of the landmarks based lower-bounds.
We further expand this optimization by 
applying the above method to the efficient calculation of upper-bounds as well. Consequently, the pruning phase of the SALT-kNN algorithm requires significantly less than $1ms$. To the best of our knowledge, this work is the \emph{first to utilize SIMD instructions within the context of \knn\ queries}.

\subsection{Summary and Expectations}
\label{subsec:motivation}

Although our experimentation (see Sec.~\ref{sec:experiments}) will show that SALT is very efficient for all types of shortest-path queries, the main phase of SALT-\kNN\ could be performed with any valid unidirectional SP algorithm. 
The use of SALT-p2p has multiple advantages 
(i)~Its constituent algorithms, ALT and CRP, are the only algorithms with fast enough preprocessing times to be used for the case of \emph{dynamic road networks}. 
SALT-p2p ``inherits'' this important property necessary for providing the optimal algorithmic foundation for live traffic-based services.
(ii)~The pruning phase of SALT-\kNN\ is very crucial for a fast implementation. \emph{Only the landmarks preprocessing data could provide this type of functionality} that could potentially replace R-tree based approaches in other location-based services as well. 
(iii)~SALT-p2p is very \emph{robust with respect to the metric used}. In fact, its query performance is slightly better for travel distances, the metric for which most hierarchical SP approaches perform badly. This is an important property for \knn\ queries identifying Points-Of-Interest based on walking distance. 
(iv)~Our results show that unidirectional SALT-p2p actually provides better performance than bidirectional SALT-p2p. This is an advantage over existing hierarchical methods, since most can only be used in a bidirectional setting. 
(v)~Last but not least, the main phase of the SALT-\kNN\ algorithm initially expands vertices closer to the query location. As such, ``unattractive'' objects furthest from the query location (as estimated by the lower-bounds) that cannot be excluded during the pruning phase do not slow down SALT-\kNN\ queries. In fact, our experimentation will show that finding the first nearest neighbor is almost as fast as a plain SALT-p2p query. Hence, it is hard to provide a significantly better theoretical solution, at least using standard SP techniques, with fast enough preprocessing times suitable for dynamic road networks.


\section{Experiments}
\label{sec:experiments}

The experimentation that follows, assesses the performance of the SALT framework and 
the respective SALT-p2p and SALT-kNN algorithms. For completeness, we also report the performance of sequential and parallel GRASP~\cite{efentakis2014} algorithm within the SALT framework for single-source (one-to-all) queries. 

Experiments were performed on a workstation with a four-core i7-4771 processor clocked at 3.5GHz with 32 GB of RAM, running Ubuntu 14.04 64bit. Our code was written in C++ and compiled with GCC 4.8 (and OpenMP). Query times are executed on one core and augmented with SSE instructions. We used the 
European road network with 18M nodes~/~42M arcs and the full USA road network with 24M nodes~/~58M arcs \cite{dimacs2009} and
experimented with both travel times and travel distances. 

For partitioning the graph into nested-multilevel partitions, similarly to \cite{efentakis2014}, we used Buffoon~/~KaFFPa \cite{sanders2011e} in a top-down approach. We use a partitioning setup similar to the best recorded CRP results of \cite{delling2013w} with total number of overlay levels set to $L{=}6$ and $|C^1|{=}1048576$, $|C^2|{=}65536$, $|C^3|{=}8192$, $|C^4|{=}1024$, $|C^5|{=}128$ and $|C^6|{=}16$. We also used 24 landmarks, since adding more landmarks did not offer significant performance benefits for either SALT-p2p or SALT-kNN algorithms.

\subsection{Preprocessing}
\label{subsec:preprocessing_exp}  

\begin{wraptable}{l}{0.58\textwidth}
\centering
\caption{SALT, GRASP and G-tree preprocessing times}
{\footnotesize
\begin{tabular}{|c|c|c|c|c|}
\hline
&   \multicolumn{4}{c|} {\bfseries{Preprocessing time (s)} }\\ 
\hline
&   \multicolumn{2}{c|} {\bfseries{Travel Times (TT)}}&\multicolumn{2}{c|} {\bfseries{Travel Distances (TD)}} \\ 
\hline
& \bfseries{EUR} & \bfseries{USA}&\bfseries{EUR} & \bfseries{USA}\\ \hline
\bfseries{SALT (GS custom. phase)} &11.1 (5.5) &14.82 (7.4) &11.3 (5.7) &15.4 (7.7) \\ \hline
\bfseries{SALT (Landmarks phase)} &2.6 (1.3) &3.6 (1.8) &2.7 (1.4) &3.6 (1.8)\\ \hline
\bfseries{SALT (Total)} &13.7 (6.9) &18.4 (9.2) &14.0 (7.0) &18.9 (9.5) \\ \hline
\bfseries{GRASP (Orig)} &8 (8)&12 (12)&10 (10)&13 (13)\\ \hline
\bfseries{G-tree} &(198,479)&(5,736)&(25,918)&(5,001)\\ 
\hline
\end{tabular} 
 \label{tab:prep}
}
\end{wraptable}

In this section we will report the preprocessing times for SALT, in comparison to the original GRASP version (as reported on ~\cite{efentakis2014}) and G-tree~\cite{zhong2013} (G-tree source code was kindly provided by its authors).  Note, that contrary to the SALT framework that may simultaneously answer single-pair, single-source (one-to-all, one-to-many, range) and $k$-NN queries, GRASP only focuses on single-source queries and G-tree may only be used for undirected networks and $k$-NN queries. 
SALT and GRASP preprocessing times refer to parallel execution, using  all available cores and G-tree preprocessing times are sequential. For GRASP and SALT and its graph-separator sub-phase we only report preprocessing times for the customization stage, similar to \cite{delling2011} and \cite{efentakis2014}, since this is the preprocessing that must be repeated when edge-weights change, as in the case of live-traffic road networks. For a fair comparison, for G-tree we do not report the partitioning time required for the building of the G-tree index (which uses METIS \cite{metis})  and we only report the preprocessing time for calculating the SP distances between the vertices inside the respective index structure. Results are presented on Table~\ref{tab:prep}. Numbers inside parentheses represent preprocessing times for undirected versions of the road networks. 

Results clearly show that: (i)~G-tree preprocessing times are very disappointing, especially for Europe and travel times, when more than 24h are required for preprocessing, 
which is in huge contrast with SALT's preprocessing time \emph{which never exceeds 19s} for all networks and metrics. 
(ii)~In comparison to GRASP, SALT may calculate both forward and reverse graph SSSP queries. If GRASP was to be extended for reverse graph SSSP queries, its preprocessing time would double and hence \emph{it would be $16{-}43\%$ slower than SALT}. 
(iii)~SALT's preprocessing time \emph{is very robust to the metric used} and preprocessing time is similar for both metrics. (iv)~For undirected versions of the road networks (for comparing results to G-tree), SALT's preprocessing time drops in half, both for the GS customization and landmarks phase.  Note that although SALT's total preprocessing time is better than any other previous ALT based approach including \cite{efentakis2013d}, the GS customization phase could be potentially further accelerated by using the optimizations of \cite{delling2013e}, such as SIMD instructions or contraction.
But even without those potential optimizations, SALT still provides excellent preprocessing time, considering the fact that SALT may answer all variants of shortest-path queries on road networks. 

\subsection{Single-pair / single-source shortest-path queries}
\label{subsec:p2p_sp_queries_exp}  

In this section we will describe unidirectional and bidirectional SALT-p2p query performance for single-pair shortest-path (SPSP) queries, compared to its individual algorithmic components, namely ALT \cite{goldberg2005} as augmented in~\cite{efentakis2013d} and our customized CRP \cite{delling2011} with the arc-reduction of~\cite{efentakis2012}, within the SALT framework. To that purpose, we executed 10,000 point-to-point queries with the pair of vertices selected uniformly at random. Regarding single-source shortest-path (SSSP) queries, we report sequential and parallel performance of GRASP for one-to-all queries within the SALT framework and compare it with the  original version of GRASP (as reported on ~\cite{efentakis2014} on an almost identical setting to ours). 
For both GRASP versions, the number in parentheses represent sequential times. Results are presented in Table~\ref{tab:spsp}. 

Considering SPSP query performance, results show that: (i)~Unidirectional SALT-p2p is always faster than bidirectional SALT-p2p. This is in stark contrast with its individual components (ALT and CRP), in which bidirectional performance is significantly better. 
Thus, to the best of our knowledge, \emph{uniSALT-p2p is the faster unidirectional algorithm for road networks, with preprocessing times of  few seconds}. (ii) SALT-p2p is $100{-}266$ times faster than ALT and 
$3{-}4$ times faster than CRP. Note that our CRP's query performance is almost identical to the best CRP implementation of \cite{delling2013w}. Moreover, SALT-p2p path unpacking (i.e., providing full paths) would also be faster than CRP, since it uses bidirectional ALT instead of bidirectional Dijkstra used by CRP \cite{delling2013w}. (iii)~SALT-p2p is impressively robust to the metric used. In fact, \emph{uniSALT-p2p is slightly faster when we switch from travel times to travel distances}.

\begin{wraptable}{l}{0.55\textwidth}
\centering
\caption{SALT-p2p and GRASP query performance}
{\footnotesize
\begin{tabular}{|c|c|c|c|c|}
\hline
&   \multicolumn{4}{c|} {\bfseries{SPSP Query times (ms)} }\\ 
\hline
&   \multicolumn{2}{c|} {\bfseries{Travel Times (TT)}}&\multicolumn{2}{c|} {\bfseries{Travel Distances (TD)}} \\ 
\hline
& \bfseries{EUR} & \bfseries{USA}&\bfseries{EUR} & \bfseries{USA}\\ \hline
\bfseries{biALT} &103&60&133&89 \\ \hline
\bfseries{CRP (+AR)} &1.6&1.8&2&2 \\ \hline
\bfseries{uniSALT-p2p} &0.6&0.6&0.5&0.5 \\ \hline
\bfseries{biSALT-p2p} &0.9&0.9&0.9&0.9\\ \hline
&   \multicolumn{4}{c|} {\bfseries{SSSP Query times (ms)} }\\ \hline
\bfseries{GRASP (Orig)} &43 (150)&58 (207) &46 (156) &66 (218) \\ \hline
\bfseries{GRASP (SALT)} &50 (169) &65 (224)&53 (175) & 68 (228) \\ \hline
\end{tabular} 
 \label{tab:spsp}
}
\end{wraptable}

Regarding SSSP queries, the GRASP implementation within the SALT framework is $5{-}12\%$ slower for sequential and $3{-}16\%$ slower for parallel execution than the original GRASP implementation. Still, it is fast enough for most practical cases and the SALT framework may also execute forward and reverse SSSP queries, which is also a considerable advantage. Note, that the slightly less efficient implementation of GRASP within SALT is mainly attributed to the fact that now $|C_L|=16$ (in comparison to $|C_L|=128$ in the original paper). Still, setting $|C_L|=16$ is the optimal setting for SPSP and $k$-NN queries, which constitute the most typical queries encountered in any shortest-path framework on road networks. In this sense, we decided to use this setting that benefits the most frequent type of queries. 

\subsection{$k$-NN queries}
\label{subsec:$k$-NN_queries3}

In this section, we compare performance between SALT-kNN, Dijkstra and G-tree~\cite{zhong2013}, in the context of $k$-NN queries. For each experiment we generate $100$ sets of random objects of varying size $|O|$  and for each such set we generate $100$ random query locations, for a total of $10,000$ $k$-NN queries per $|O|$. The same $10,000$ queries per $|O|$ are executed for varying values of $k =\{1,2,4,8,16\}$ and we report average query times. 
Note, that G-tree also requires a \emph{target selection phase}, for each set of objects $|O|$ (which takes almost $1.9{-}2.4 s$). Thus, contrarily to both Dijkstra and SALT-kNN, G-tree cannot be used for moving objects. 
Results for $k=1$ and $k=4$ 
are presented in Fig.~\ref{fig:k_4}. 

\begin{figure}[!tb]
\centering
 \subfigure[Europe ($k{=}1)$]
{\includegraphics[width=0.305\columnwidth]{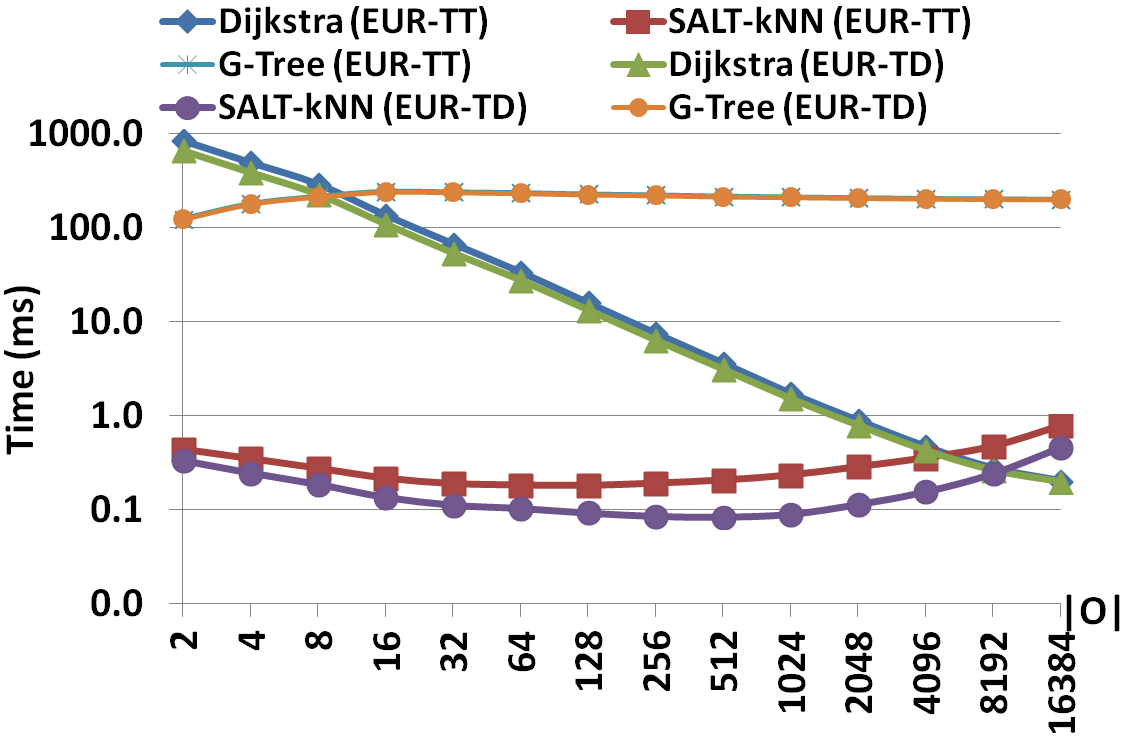}\label{fig:salt_eur_k1}}\qquad \qquad
 \subfigure[USA ($k{=}1)$]
{\includegraphics[width=0.305\columnwidth]{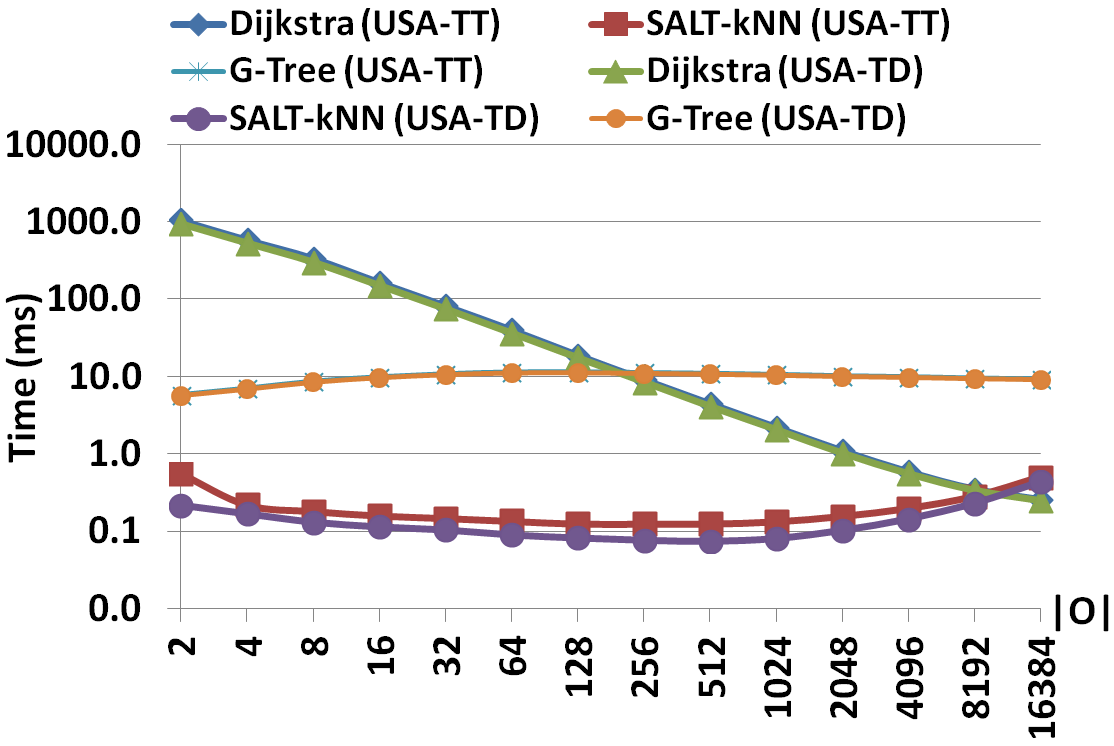}\label{fig:salt_usa_k1}}
 \subfigure[Europe ($k{=}4$)]
{\includegraphics[width=0.305\columnwidth]{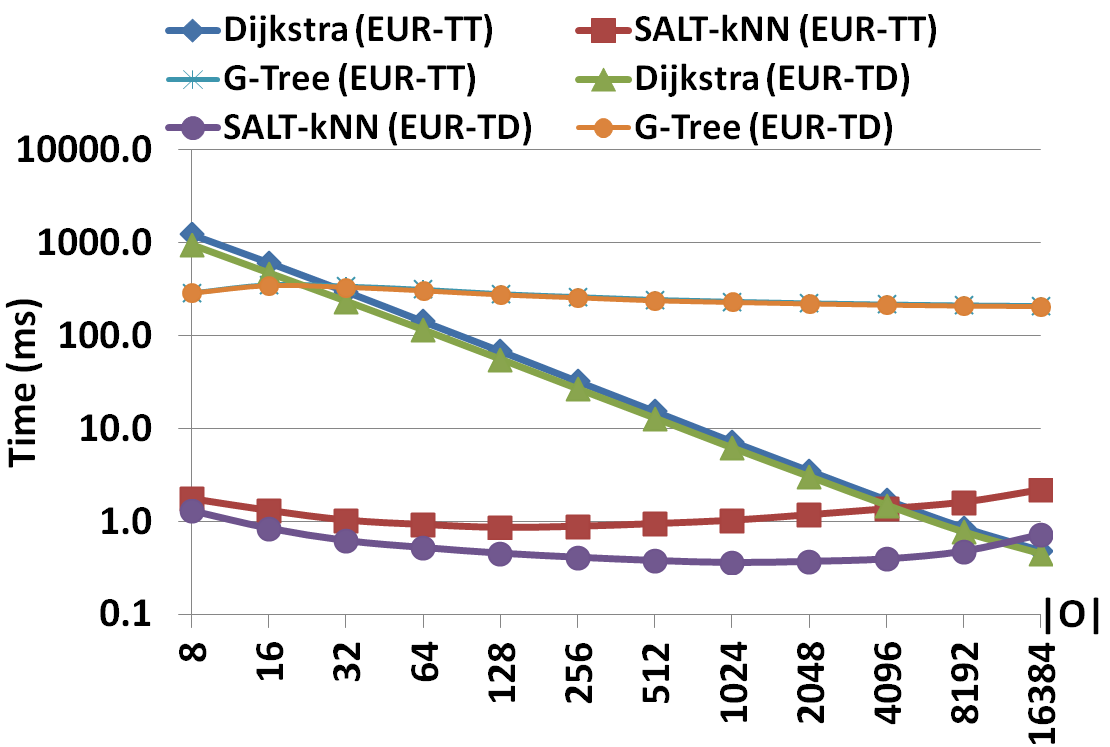}\label{fig:salt_eur_k4}}\qquad \qquad
 \subfigure[USA ($k{=}4$)]
{\includegraphics[width=0.305\columnwidth]{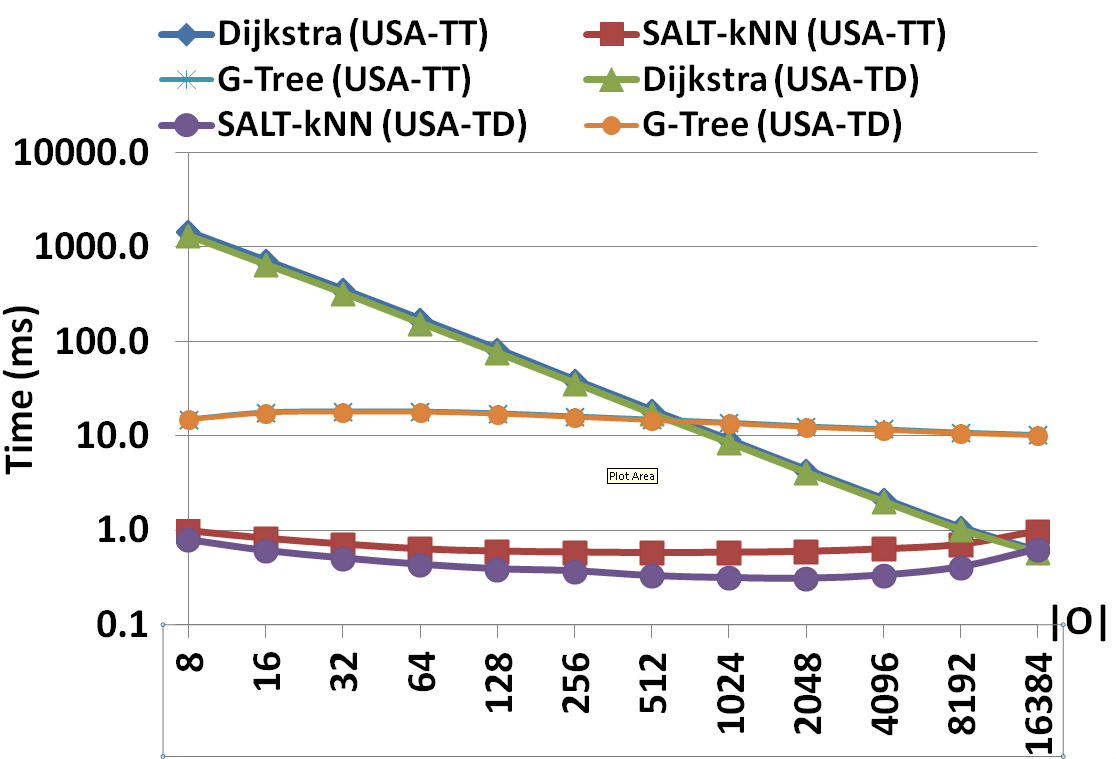}\label{fig:salt_usa_k4}}
 \caption{SALT-kNN, Dijkstra and G-tree comparison for $k{=}1$ and  $k{=}4$ and varying values of $|O|$.}
 \label{fig:k_4}
\end{figure}  

Results clearly show that SALT-kNN provides stable performance and query times significantly below $1ms$ for $k{=}1$. Contrarily, G-tree is almost \emph{two - three orders of magnitude slower} and therefore cannot compete with either SALT-kNN or Dijkstra. Dijkstra starts very slow for small values of $|O|$ but manages to surpass SALT-kNN performance for $|O|>8192$.   
This was inevitable to happen for any $k$-NN method, since \emph{if the number of objects is significantly large and their distribution is random, an efficient Dijkstra implementation would only have to scan a few hundred nodes}. 
Still, since for static points of interest we are usually interested in a specific type of objects (e.g., gas stations) and in the case of moving objects we rarely have such large vehicle fleets (i.e., taxis, trucks) to monitor and 
we usually aim for $k$-NN queries among the \emph{available} vehicles (a much smaller subset of total vehicles), then the SALT-kNN algorithm is surely to perform better for most practical applications. 

After establishing the superiority of SALT-kNN over G-tree, in our second set of experiments, we evaluate the impact of objects distribution to SALT-kNN and Dijkstra's performance. To that purpose, we adapt the methodology of~\cite{delling2011g}. We pick a vertex at random and run Dijkstra's algorithm from it until reaching a predetermined number of vertices $|B|$. If $B$ is the set of vertices visited during this search, we pick our objects $O$ as a random subset of $B$. We keep the number of objects $|O|$ steady at $2^{14}$ and we experiment with different values of $|B|$ ranging from $2^{14} \dots 2^{24}$, to simulate cases of either: (i) points of interest mainly located near the city-center or (ii) vehicle fleets which may service an entire continent but operate mainly on a particular country. Results for $k=1$ and $k=4$ are presented in Fig.
~\ref{fig:k_1_14_24}.

\begin{figure}[!tb]
\centering
 \subfigure[Europe ($k{=}1$)]
{\includegraphics[width=0.305\columnwidth]{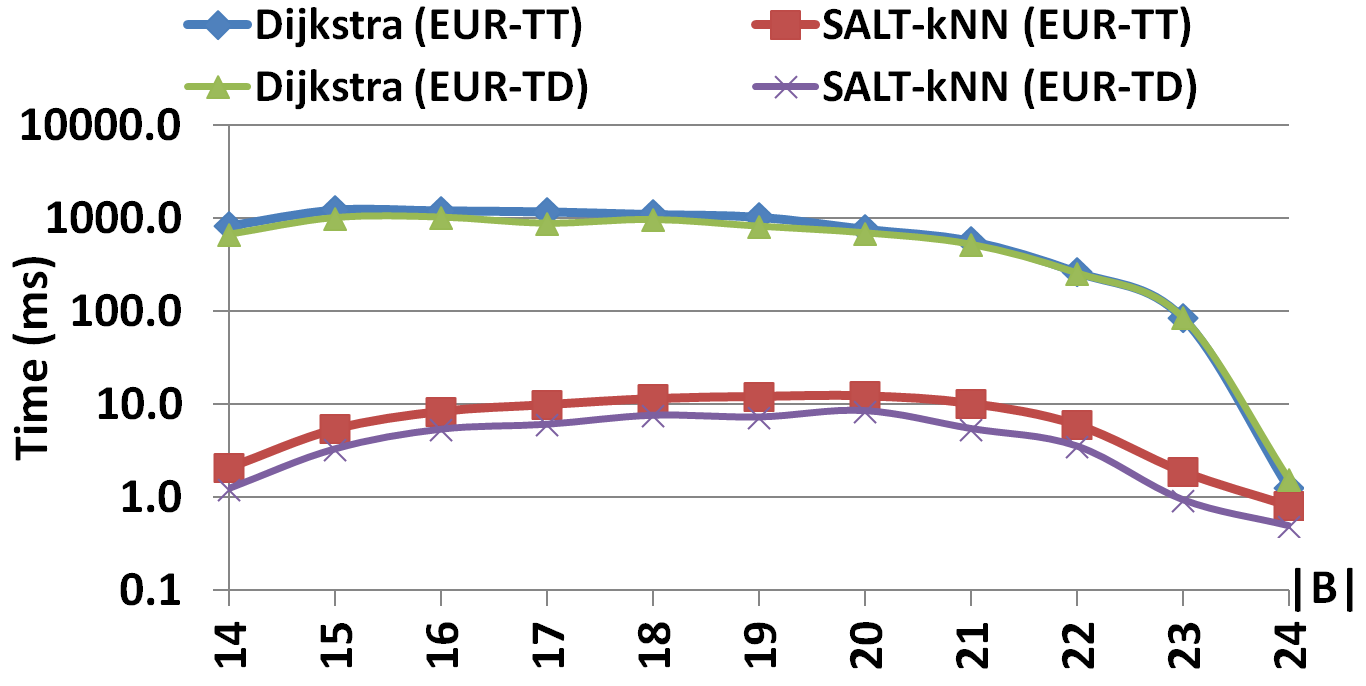}\label{fig:salt_eur_k1_14_24}}\qquad \qquad
 \subfigure[USA ($k{=}1$)]
{\includegraphics[width=0.305\columnwidth]{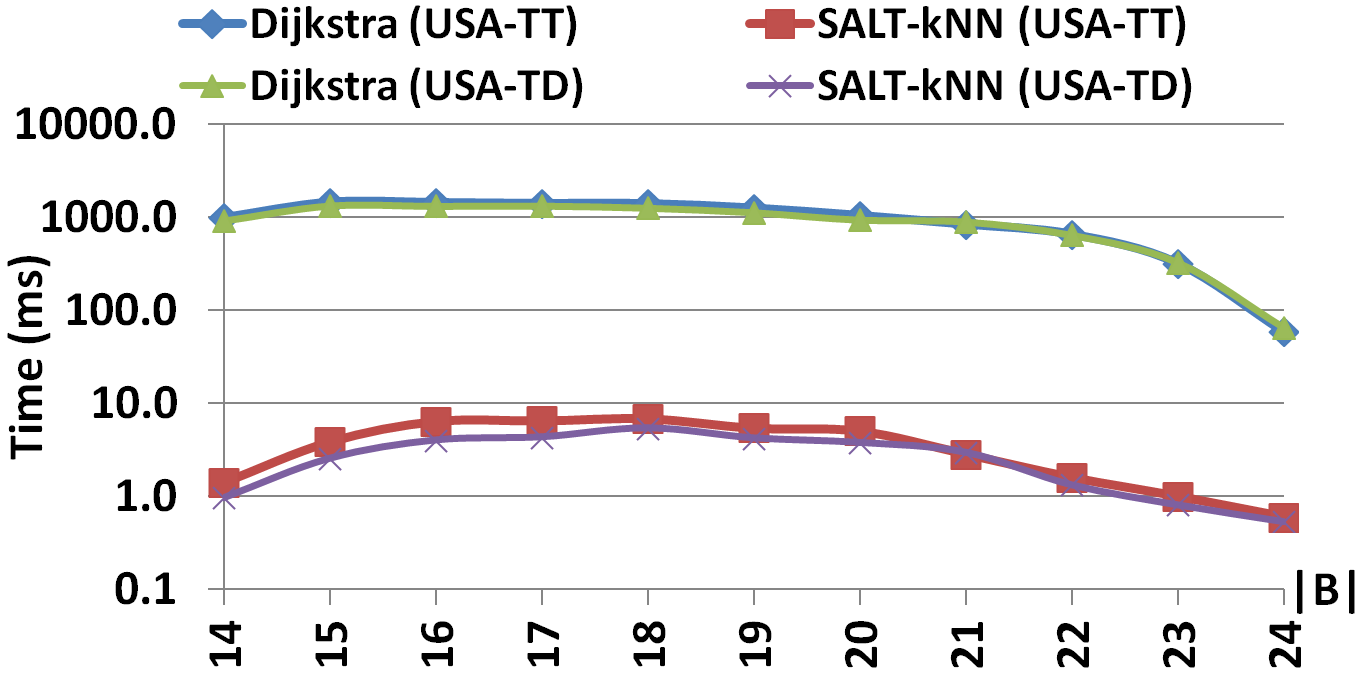}\label{fig:salt_usa_k1_14_24}}
 \subfigure[Europe ($k{=}4$)]
{\includegraphics[width=0.305\columnwidth]{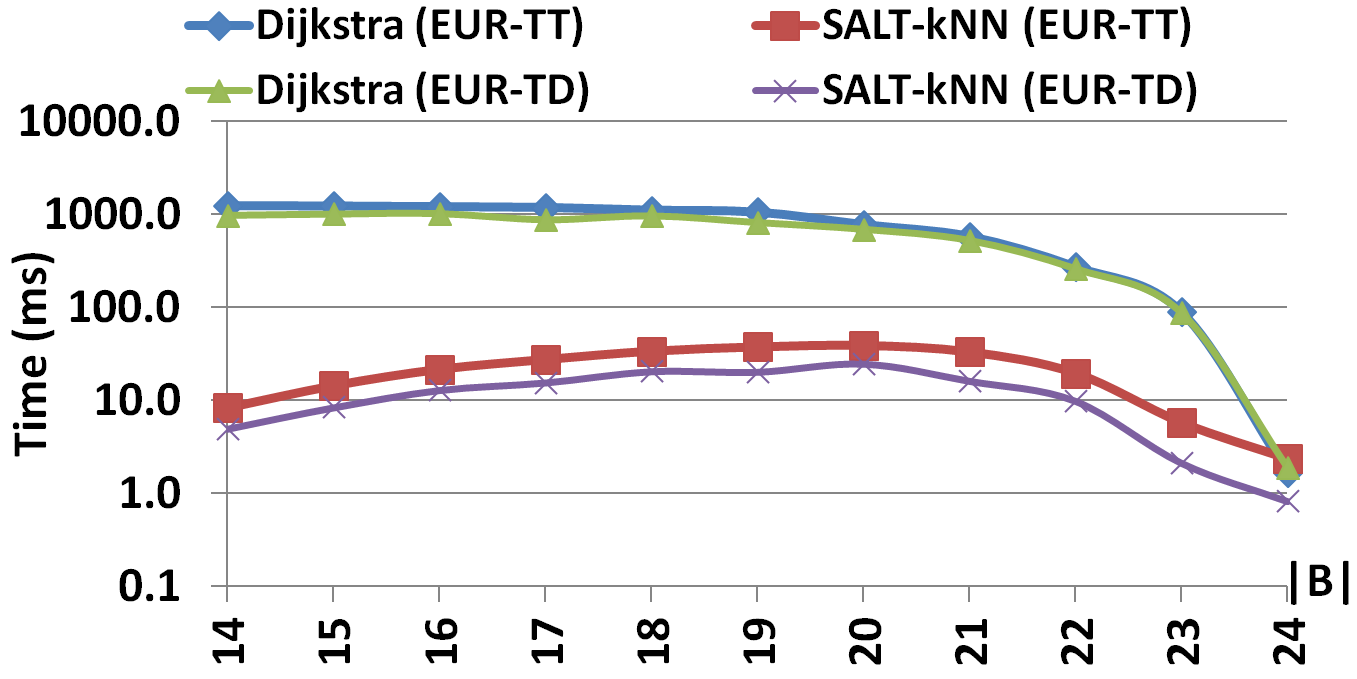}\label{fig:salt_eur_k4_14_24}}\qquad \qquad
 \subfigure[USA ($k{=}4$)]
{\includegraphics[width=0.305\columnwidth]{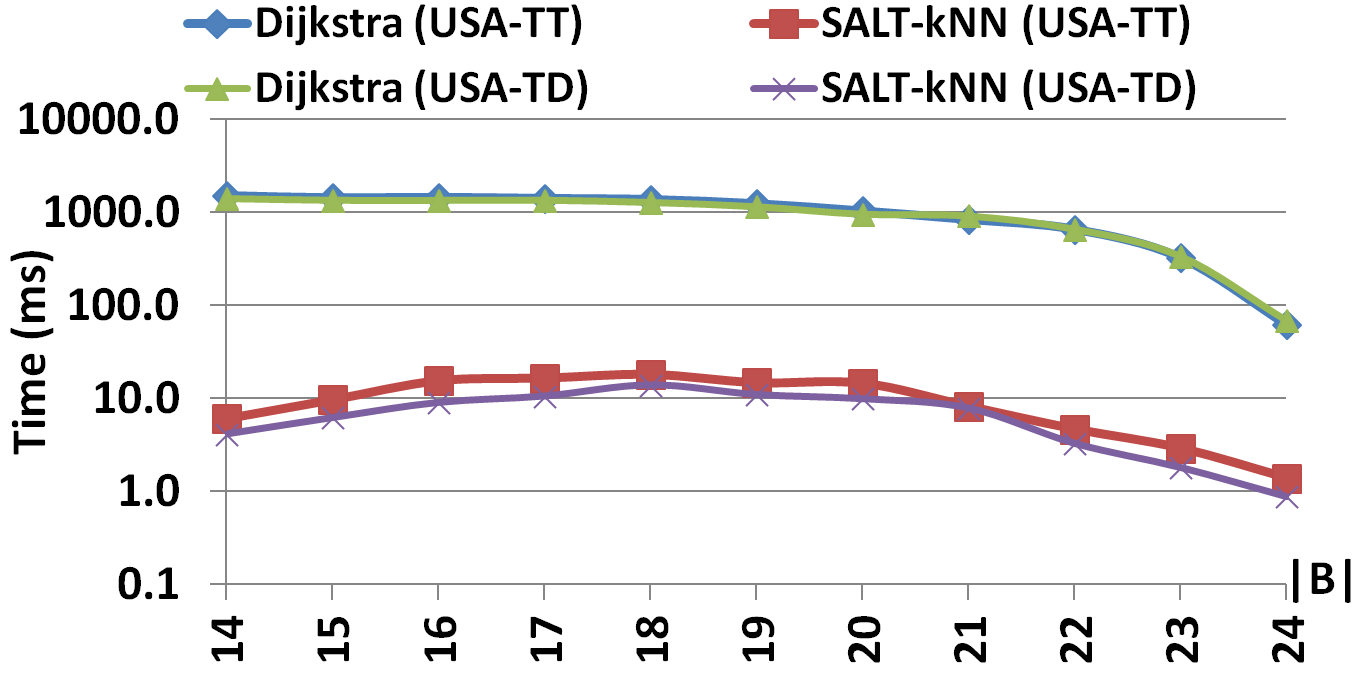}\label{fig:salt_usa_k4_14_24.png}}
 \caption{SALT-kNN and Dijkstra comparison for $|O|=2^{14}$, $k=1$ and $k=4$ and varying values of $|B|$.}\label{fig:k_1_14_24}
\end{figure}  

Result show, that once again SALT-kNN provides excellent performance regardless of the object distribution, contrarily to Dijkstra which is 1-2 orders of magnitude slower when objects are not uniformly located in the road network (which is the typical case, either for static or moving objects). Thus, SALT-kNN is the only algorithm that guarantees excellent and stable performance, regardless of: (i)~the number of objects and (ii)~the objects distribution. Moreover, it does not need a target selection phase, such as G-tree or CRP and therefore, it may be used for either static or moving objects. Note, than even without building an index, CRP would still require $10ms$ for the target selection phase for 16384 objects for the Europe road network (as recorded in \cite{delling2014pp} on a better workstation than ours) and therefore, CRP would \emph{be at least 10 times slower than SALT-kNN for moving objects}. 


\section{Summary and Conclusions}
\label{sec:conclusions}

This work presented SALT, a novel framework for answering shortest-path queries on road networks, including point-to-point, single-source (one-to-all, one-to-many, range) and \knn\ queries. By combining ideas from the ALT, CRP and GRASP algorithms, the SALT framework efficiently answers point-to-point queries $3{-}4$ times faster than previous algorithms of similar preprocessing times and answers \knn\ queries orders of magnitude faster than previous index-based approaches. Moreover, the proposed SALT-kNN algorithm was shown to be especially robust,  regardless of the metric used, the number of objects or the distribution of objects in the road network. Hence, it presents itself as an excellent solution for most practical use-cases. 

Despite its excellent query performance, the most important advantage of the SALT framework is its flexibility and versatility with respect to the different variants of the shortest-path queries it services. 
The exact same data structures efficiently tackle a wide range of different shortest-path problems, with preprocessing time of only a few seconds, making SALT suitable for dynamic (live-traffic) road networks as well. To the best of our knowledge, there is no other framework that matches the benefits and versatility of SALT. We truly consider it the algorithmic version of a swiss army knife for shortest-path queries and the best overall solution for real-world applications. 

\section*{Acknowledgments}
{\small The research leading to these results has received funding from the EU FP7 Project  
 ``GEOSTREAM'' (\url{http://geocontentstream.eu}, GA: FP7-SME-2012-315631). The authors would also like to thank Ruicheng Zhong for providing us access to the G-tree source code.}

\vspace{-3pt}
\bibliographystyle{abbrv}
\bibliography{gis2014}

\end{document}